\newcommand{\DEF}{\sl}
\newcommand{\STAB}{\mathop{\mathrm{STAB}}}
\newcommand{\SPAN}{\mathop{\mathrm{P}_\mathrm{spanning\ tree}}}
\newcommand{\PM}{\mathop{\mathrm{P}_\mathrm{perfect\ matching}}}
\newcommand{\conv}{\mathop{\mathrm{conv}}}
\newcommand{\nnegrk}{\mathop{\mathrm{rank}_+}} 
\newcommand{\xc}{\mathop{\mathrm{xc}}} 
\theoremstyle{plain}
\newtheorem{theorem}{Theorem}
\newtheorem{lemma}{Lemma}
\newtheorem{proposition}{Proposition}
\theoremstyle{definition}
\theoremstyle{remark}
\newtheorem{cor}[theorem]{Corollary}
\begin{document}

\title{Extended formulations, nonnegative factorizations, and randomized communication protocols\thanks{A previous and reduced version of this paper appeared in the Proceedings of ISCO 2012.}}

\author{Yuri Faenza \thanks{Institut de math\'ematiques d'analyse et applications, EPFL, Lausanne, Switzerland, {\em yuri.faenza@epfl.ch}. Supported by the German Research Foundation (DFG) within the Priority Programme 1307 Algorithm Engineering.}\and Samuel Fiorini\thanks{D\'epartement de Math\'ematique, Universit\'e Libre de Bruxelles CP 216, Boulevard du Triomphe, 1050 Brussels, Belgium, {\em sfiorini@ulb.ac.be}. Partially supported by the \emph{Actions de Recherche Concert\'ees} (ARC) fund of the French community of Belgium.} \and Roland Grappe\thanks{Laboratoire d'Informatique de Paris-Nord, UMR CNRS 7030, Institut Galil\'ee - Universit\'e Paris-Nord, Avenue Jean-Baptiste Cl\'ement, 93430 Villetaneuse, France, {\em roland.grappe@lipn.univ-paris13.fr}. Partially supported by the Progetto di Eccellenza
2008--2009 of the Fondazione Cassa di Risparmio di Padova e
Rovigo.} \and Hans Raj Tiwary\thanks{D\'epartement de Math\'ematique, Universit\'e Libre de Bruxelles CP 216, Boulevard du Triomphe, 1050 Brussels, Belgium. {\em htiwary@ulb.ac.be}. Postdoctoral Researcher of the \emph{Fonds National de la Recherche Scientifique} (F.R.S.--FNRS).}}


\maketitle

\begin{abstract}
An extended formulation of a polyhedron $P$ is a linear description of a polyhedron $Q$ together with a linear map $\pi$ such that $\pi(Q)=P$. These objects are of fundamental importance in polyhedral combinatorics and optimization theory, and the subject of a number of studies. Yannakakis' factorization theorem [M. Yannakakis. Expressing combinatorial optimization problems by linear programs.
{\em J. Comput. System Sci.}, 43(3):441--466 (1991)] provides a surprising connection between extended formulations and communication complexity, showing that the smallest size of an extended formulation of $P$ equals the nonnegative rank of its slack matrix $S$. Moreover, Yannakakis also shows that the nonnegative rank of $S$ is at most $2^c$, where $c$ is the complexity of any \emph{deterministic} protocol computing $S$. In this paper, we show that the latter result can be strengthened when we allow protocols to be \emph{randomized}. In particular, we prove that the base-$2$ logarithm of the nonnegative rank of any nonnegative matrix equals the minimum complexity of a randomized communication protocol computing the matrix in expectation. Using Yannakakis' factorization theorem, this implies that the base-$2$ logarithm of the smallest size of an extended formulation of a polytope $P$ equals the minimum complexity of a randomized 
communication protocol computing the slack matrix of $P$ in expectation. We show that allowing randomization in the protocol can be crucial for obtaining small extended formulations. Specifically, we prove that for the spanning tree and perfect matching polytopes, small variance in the protocol forces large size in the extended formulation.
\end{abstract}

%
%

\section{Introduction}

Extended formulations are a powerful tool for minimizing linear or, more generally, convex functions over polyhedra (see, e.g., Ziegler~\cite{Ziegler} for background on polyhedra and polytopes). Consider a polyhedron $P$ in $\mathbb{R}^d$ and a convex function $f : \mathbb{R}^d \to \mathbb{R}$, that has to be minimized over $P$. If a small size linear description of $P$ is known, then minimizing $f$ over $P$ can be done efficiently using an interior point algorithm, or the simplex algorithm if $f$ is linear and theoretical efficiency is not required.

However, $P$ can potentially have many facets. Or worse: it can be that no explicit complete linear description of $P$ is known. This does not necessarily make the given optimization problem difficult. A fundamental result of Gr\"otschel, Lov\'asz and Schrijver~\cite{GrotschelLovaszSchrijverBook}
states that if there exists an efficient algorithm solving the separation problem for $P$, then optimizing over $P$ can be done efficiently. However, this result uses the ellipsoid algorithm, which is not very efficient in practice. Thus it is desirable to avoid using the ellipsoid algorithm.

Now suppose that there exists a polyhedron $Q$ in a higher dimensional space $\mathbb{R}^e$ such that $P$ is the image of $Q$ under a linear projection $\pi : \mathbb{R}^e \to \mathbb{R}^d$. The polyhedron $Q$ together with the projection $\pi$ defines an {\DEF extension} of $P$, while we call {\DEF extended formulation} of $P$ any description of $Q$ by means of linear inequalities and equations, together with the map $\pi$. Minimizing $f$ over $P$ amounts to minimizing $f \circ \pi$ over $Q$. If $Q$ has few facets, then we can resort to an interior point algorithm or the simplex algorithm to solve the optimization problem. Of course, one should also take into account the size of the coefficients in the linear description of $Q$ and in the matrix of $\pi$. But this can essentially be ignored for $0/1$-polytopes $P$~\cite{Rothvoss11}.

The success of extended formulations is due to the fact that a moderate increase in dimension can result in a dramatic decrease in the number of facets. For instance, $P$ can have exponentially many facets, while $Q$ has only polynomially many. We will see examples of this phenomenon later in this paper. For more examples, and background, see the recent surveys by Conforti, Cornu\'ejols and Zambelli~\cite{ConfortiCornuejolsZambelli10}, Kaibel~\cite{Kaibel} and Wolsey~\cite{Wolsey11}.

Extensions provide an interesting measure of how ``complex'' a polyhedron is: define  the {\DEF size} of an extension $Q$ of $P$ as the number of facets of $Q$ and the {\DEF extension complexity} of a polyhedron $P$ as the minimum size of any extension of $P$. Following~\cite{FioriniKaibelPashkovichTheis11}, we denote this number by $\xc(P)$.
The {\DEF size} of an extended formulation of $P$ is the number of inequalities of the linear system (hence, neither equations nor variables are taken into account). Note that the size of an extended formulation is at least the size of the associated extension, and any extension $Q$ has an extended formulation describing $Q$ with the same size.

This paper builds on Yannakakis' seminal paper~\cite{Yannakakis91}. We briefly review his contribution, postponing formal definitions to Section \ref{sec:prelim}. Because we mainly consider polytopes, we assume from now on that $P$ is bounded, that is, $P$ is a polytope. (This is not a major restriction.)
Yannakakis' {\DEF factorization theorem} (Theorem \ref{thm:Y91}) states that to each size-$r$ extension of a polytope $P$ corresponds a {\DEF rank-$r$ nonnegative factorization} of some matrix $S(P)$ associated to $P$, called the {\DEF slack matrix}, and conversely to each rank-$r$ nonnegative factorization of $S(P)$ corresponds a size-$r$ extension of $P$. In particular, the extension complexity $\xc(P)$ equals the smallest rank of a nonnegative factorization of $S(P)$, that is, the {\DEF nonnegative rank} of $S(P)$. 

In~\cite{Yannakakis91}, Yannakakis also shows that every $\lg r$-complexity \emph{deterministic} protocol computing a nonnegative matrix $M$ determines a rank-$r$ nonnegative factorization of $M$.~\footnote{Throughout this paper, we use $\lg$ for binary logarithm.} By the aforementioned factorization theorem, this implies that one can produce extended formulations (and hence upper bounds to the extension complexity) via deterministic communication protocols. Yannakakis used this to obtain a quasipolynomial $n^{O(\log n)}$-size extension for the stable set polytope of a $n$-vertex perfect graph.

\paragraph{Our contribution} The main goal of this paper is to strengthen the connection between nonnegative rank of matrices (and hence, extension complexity of polytopes) and communication protocols. First we give a brief overview of our results and then provide more details along with an outline of the paper. Our contribution is threefold:

\begin{itemize}

\item We pinpoint the ``right'' model of communication protocol, that exactly corresponds to nonnegative factorizations. We remark that this was done independently by Zhang~\cite{Zhang10}. Proving such a correspondence is an important conceptual step since it gives a third equivalent way to think about extensions of polytopes, besides projections of polytopes and nonnegative factorizations. Communication protocols are very versatile and we hope that this paper will convince discrete optimizers to add this tool to their arsenal.

\item We provide examples of already known extensions, seen as communication protocols, and also of new extensions obtained from communication protocols.

\item We prove that the randomization allowed in our protocols is sometimes necessary for obtaining small size extensions. We give a general condition under which small variance in the protocol implies that the size of the corresponding extension is large, which in particular applies to the perfect matching polytope and spanning tree polytope. This indicates that Yannakakis' approach for the stable set polytope of a perfect graph \emph{cannot work} for the perfect matching polytope or spanning tree polytope, since his protocol is deterministic and hence the corresponding variance zero. 
\end{itemize}

More specifically, we define a new model of \emph{randomized communication protocols computing the matrix in expectation}. This generalizes the one used by Yannakakis in~\cite{Yannakakis91} (see Section \ref{sec:cc}; our definition differs substantially from the usual notion of of random protocol computing a matrix \emph{with high probability}, which can be found e.g. in~\cite{KushilevitzNisan97}). Our protocols perfectly model the relation between the nonnegative factorization of a matrix and communication complexity: in fact, we show that the base-$2$ logarithm of the nonnegative rank of any nonnegative matrix  (rounded up to the next integer) \emph{equals} the minimum complexity of a randomized communication protocol computing the matrix in expectation (Theorem \ref{thm:p_vs_f}). By Yannakakis' factorization theorem, this implies a new characterization of the extension complexity of polytopes (Corollary \ref{cor:protocol_extform}). 

We then provide evidence that these protocols are substantially more powerful than the deterministic ones used, e.g., by Yannakakis. In fact, one can associate to each protocol a \emph{variance} (see Subsection \ref{subsec:variance}) which, roughly speaking, indicates the ``amount of randomness'' of the protocol: protocols with variance zero are deterministic protocols. We show that no compact formulation for the spanning tree polytope arises from protocols with small variance (see Section \ref{sec:spanning_trees}), while we provide a randomized protocol that produces the $O(n^3)$ formulation for the spanning tree polytope of $K^n$ due to Martin~\cite{Martin91} (see Section \ref{subsec:MST}). 

We also investigate the existence of compact extended formulation for the matching polytope --- a fundamental open problem in polyhedral combinatorics. Yannakakis~\cite{Yannakakis91} (see also~\cite{KaibelPashkovichTheis10}) proved that every {\em symmetric} extension of the perfect matching polytope of the complete graph $K^n$ has exponential size (we do not formally define \emph{symmetric} here, since 
we shall not need it; the interested reader may refer to~\cite{Yannakakis91}).
We show that a negative result similar to the one of the spanning tree polytope holds true for matchings: no compact formulation for the matching polytope arises from protocols with small variance (see Section \ref{sec:lb_pm}). Thus, in particular, deterministic protocols cannot be used to provide compact extended formulations for the perfect matching polytope. We also provide a randomized protocol that produces a  $O(1.42^n)$ formulation for the matching polytope implicit in Kaibel, Pashkovich and Theis~\cite{KaibelPashkovichTheis10} (see Subsection \ref{subsec:pm_ub}). The negative results on both the spanning tree and the matching polytopes are obtained via a general technique that exploits known negative results on the communication complexity of the set disjointness problem.

We would like to remark that the results contained in this paper were, at a conceptual level, an important stepping stone for the strong lower bounds on the extension complexities of the cut, stable set and TSP polytopes of Fiorini, Massar, Pokutta, Tiwary and de Wolf~\cite{FMPTW11}.

\section{Preliminary definitions and results}
\label{sec:prelim}

\subsection{The factorization theorem and related concepts} 

Consider a polytope $P$ in $\mathbb{R}^d$ with $m$ facets and $n$ vertices. Let $h_1$, \ldots, $h_m$ be $m$ affine functions on $\mathbb{R}^d$ such that $h_1(x) \geqslant 0$, \ldots, $h_m(x) \geqslant 0$ are all the facet-defining inequalities of $P$. Let also $v_1$, \ldots, $v_n$ denote the vertices of $P$. The {\DEF slack matrix} of $P$ is the nonnegative $m \times n$ matrix $S = S(P) = (s_{ij})$ with $s_{ij} = h_i(v_j)$. Also note that the facet-defining inequalities can be defined up to any positive scaling factor. It should be clear that such a scaling does not alter the non-negative rank of a matrix. To see this let $S=AB$ and let $S'$ be a matrix obtained by multiplying the $i$-th row of $S$ by $\lambda > 0$. Then, $S'=A'B$ where $A'$ is obtained by multiplying the $i$-th row of $A$ by $\lambda$.

A {\DEF rank-$r$ nonnegative factorization} of a nonnegative matrix $S$ is an expression of $S$ as a product $S = AB$ where $A$ and $B$ are nonnegative matrices with $r$ columns and $r$ rows, respectively. The {\DEF nonnegative rank} of $S$, denoted by $\nnegrk(S)$, is the minimum nonnegative integer $r$ such that $S$ admits a rank-$r$ nonnegative factorization~\cite{CohenRothblum93}. Observe that the nonnegative rank of $S$ can also be defined as the minimum nonnegative integer $r$ such that $S$ is the sum of $r$ nonnegative rank-$1$ matrices.

In a seminal paper, Yannakakis~\cite{Yannakakis91} proved, among other things, that the extension complexity of a polytope is precisely the nonnegative rank of its slack matrix (see also~\cite{FioriniKaibelPashkovichTheis11}).

\begin{theorem}[Yannakakis' factorization theorem] 
\label{thm:Y91}
For all polytopes $P$ that are neither empty or a point,
$$
\xc(P) = \nnegrk(S(P)).
$$
\end{theorem}

Before going on, we sketch the proof of half of the theorem. Assuming $P = \{x \in \mathbb{R}^d : Ex \leqslant g\}$, consider a rank-$r$ nonnegative factorization $S(P) = FV$ of the slack matrix of $P$. Then it can be shown that $Q := \{(x,y) \in \mathbb{R}^{d+r} : Ex + Fy = g,\ y \geqslant \mathbf{0}\}$ is an extension of $P$. Notice that $Q$ has at most $r$ facets, and $r$ extra variables\footnote{The extended formulation for $Q$ given above potentially has a large number of equalities, but recall we only consider the number of inequalities in the size of the extended formulation. The reasons for this are twofold: first, one can ignore most of the equalities after picking a small number of linearly independent equalities; and second, our concern in this paper is mainly the \emph{existence} of certain extensions.}. Taking $r = \nnegrk(S(P))$ implies $\xc(P) \leqslant \nnegrk(S(P))$. Moreover, since $P$ is a polytope, one can also assume that $Q$ is bounded, as shown by the following lemma.

\begin{lemma}\label{lem:boundedExtension}
Let $P = \{x \in \mathbb{R}^d : Ex \leqslant g\}$ be a polytope, let $S(P) = FV$ be a rank-$r$ nonnegative factorization of the slack matrix of $P$ with $r := \nnegrk(S(P))$, and let $Q := \{(x,y) \in \mathbb{R}^{d+r} : Ex + Fy = g,\ y \geqslant \mathbf{0}\}$. Then $Q$ is bounded.
\end{lemma}

\begin{proof}
The polyhedron $Q$ is unbounded if and only if its recession cone $\mathrm{rec}(Q) = \{(x,y) \in \mathbb{R}^{d+r} : Ex + Fy = \mathbf{0}, y \geqslant \mathbf{0}\}$ contains some nonzero vector. Since $P$ is bounded and the image of $Q$ under the projection $(x,y) \mapsto x$ is $P$, we have $x =\mathbf{0}$ for every point $(x,y) \in \mathrm{rec}(Q)$. Therefore, $Q$ is unbounded if and only if the system $Fy = \mathbf{0}, y \geqslant \mathbf{0}$ has a solution $y \neq \mathbf{0}$. But any such $y$ represents $\mathbf{0}$ as a non-trivial conical combination of the column vectors of $F$. Since $F$ is nonnegative, this is only possible if one of the columns of $F$ is identically zero, which would contradict the minimality of $r$. \qed \end{proof}

\subsection{Polytopes relevant to this work}
\label{sec:polytopes}

Now we describe briefly various families of polytopes relevant to this paper. For a more detailed description of these polytopes, we refer the reader to Schrijver~\cite{SchrijverBookB03}.

Let $I$ be a finite ground set. The {\DEF characteristic vector} of a subset $J \subseteq I$ is the vector $\chi^J \in \mathbb{R}^I$ defined as
\[
  \chi^J_i = \left\{
  \begin{array}{l l}
    1 & \quad \text{if } i \in J\\
    0 & \quad \text{if } i \notin J
  \end{array} \right.
\]
for $i \in I$. For $x \in \mathbb{R}^I$, we let $x(J) := \sum_{i \in J} x_i$.

Throughout this section, $G = (V,E)$ denotes a (finite, simple, undirected) graph. For a subset of vertices $U\subseteq V$, we denote the edges of the subgraph induced by $U$ as $E[U].$ The {\DEF cut} defined by $U$, denoted as $\delta(U)$, is the set of edges of $G$ exactly one of whose endpoints is in $U$. That is,
\begin{eqnarray*}
E[U] &= &\{uv \in E : u\in U, v \in U\}, \text{ and}\\
\delta(U) &= &\{uv \in EÊ:Êu\in U, v \notin U\}.
\end{eqnarray*}

Later in this paper, we will often take $G$ to be the {\DEF complete graph} $K^n$ with vertex set $V(K^n) = [n] := \{1,\ldots,n\}$ and edge set $E(K^n) = \{ij : i, j \in [n], i \neq j\}$.

\subsubsection{Spanning Tree Polytope}
\label{sec:st_polytope}

A {\DEF spanning tree} of $G$ is a tree $T=(V(T),E(T))$ (i.e., a connected graph without cycles) whose set of vertices and edges respectively satisfy $V(T) = V$ and $E(T) \subseteq E$. The {\DEF spanning tree polytope} of $G$ is the convex hull of the characteristic vectors of the spanning trees of $G$, i.e.,
$$
\SPAN(G) = \conv\{\chi^{E(T)} \in \mathbb{R}^E : T \text{ spanning tree of } G\}.
$$
Edmonds~\cite{Edmonds71} showed that this polytope admits the following linear description (see also~\cite{SchrijverBookB03}, page 861):
$$\begin{array}{rcll}
x(E[U]) &\leqslant &|U|-1 & \quad \text{for nonempty } U\subsetneq V,\\
x(E) &= &|V|-1,\\
x_e &\geqslant& 0 &  \quad \text{for } e\in E.
\end{array}
$$
This follows, e.g., from the fact that the spanning tree polytope of $G$ is the base polytope of the graphic matroid of $G$.

\subsubsection{Perfect Matching polytope}
\label{sec:pm_polytope}

A {\DEF perfect matching} of $G$ is set of edges $M \subseteq E$ such that every vertex of $G$ is incident to exactly one edge in~$M$. The {\DEF perfect matching polytope} of the graph $G$ is the convex hull of the characteristic vectors of the perfect matchings of $G,$ i.e.,
$$
\PM(G) = \conv\{\chi^M \in\mathbb{R}^E : M~ \text{perfect matching of}~ G\}.
$$
Edmonds~\cite{Edmonds65} showed that the perfect matching polytope of $G$ is described by the following linear constraints (see also~\cite{SchrijverBookB03}, page 438):
\begin{eqnarray*}
x(\delta(U)) &\geqslant &1 \quad \text{for } U\subseteq V \text{ with } |U| \text{ odd},\ |U| \geqslant 3\\
x(\delta(\{v\})) &= &1 \quad \text{for } v \in V,\\
x_e &\geqslant& 0 \quad \text{for } e\in E.
\end{eqnarray*}

\subsubsection{Stable Set polytope}
\label{sec:stab_polytope}

A {\DEF stable set} $S$ (often also called an {\DEF independent set}) of $G$ is a subset of the vertices such that no two of them are adjacent. A {\DEF clique} $K$ of $G$ is a subset of the vertices such that every two of them are adjacent. The {\DEF stable set polytope} $\STAB(G)$ of a graph $G(V,E)$ is the convex hull of the characteristic vectors of the stable sets in $G$, i.e.,
$$
\STAB(G) = \conv\{\chi^S \in\mathbb{R}^V : S \text{ stable set of } G\}.
$$

No complete linear description of the stable set polytope for arbitrary graphs is known. It is, however, known that the following inequalities are valid for $\STAB(G)$ for any graph $G$:
\begin{eqnarray}
\label{eq:stab_clique} x(K) &\leqslant &1 \quad\text{for cliques } K \text{ of } G,\\
\label{eq:stab_nneg} x_v &\geqslant &0 \quad\text{for } v\in V.
\end{eqnarray}
Inequalities \eqref{eq:stab_clique} are called the {\DEF clique inequalities}. See Schrijver~\cite{SchrijverBookB03} for details.

A graph $G$ is called {\DEF perfect} if the chromatic number of every induced subgraph equals the size of the largest clique of that subgraph. It is known that $G$ is perfect if and only if inequalities \eqref{eq:stab_clique} and \eqref{eq:stab_nneg} completely describe $\STAB(G)$~\cite{Chvatal75}.

\section{Communication complexity}
\label{sec:cc}

We start by an overview of the standard model of deterministic communication protocols, as described in detail in the book by Kushilevitz and Nisan~\cite{KushilevitzNisan97}. We follow this with a detailed description of our notion of a randomized protocol (with private random bits and nonnegative outputs) computing a function \emph{in expectation}. This differs significantly from the standard definition in the literature where randomized protocols usually compute a function exactly \emph{with high probability}. 

\subsection{Deterministic protocols}
\label{sec:det_protocols}

Let $X$, $Y$, and $Z$ be arbitrary finite sets with $Z \subseteq \mathbb{R}_+$, and let $f : X \times Y \rightarrow Z$ be a function. Suppose that there are two players Alice and Bob who wish to compute $f(x,y)$ for some inputs $x \in X$ and $y \in Y$. Alice knows only $x$ and Bob knows only $y$. They must therefore exchange information to be able to compute $f(x,y)$. (We assume that each player possesses unlimited computational power.)

The communication is carried out as a protocol that is agreed upon beforehand by Alice and Bob, on the sole basis of the function $f$. At each step of the protocol, one of the player has the token. Whoever has the token sends a bit to the other player, that depends only on their input and on previously exchanged bits. 
This is repeated until the value of $f$ on $(x,y)$ is known to both players. The minimum number of bits exchanged between the players in the worst case to be able to evaluate $f$ by any protocol is called the \emph{communication complexity} of $f.$

\subsection{Randomized protocols and computation in expectation}
\label{sec:rand_protocols}

A protocol can be viewed as a rooted binary tree where each node is marked either Alice or Bob. The leaves have vectors associated with them. An execution of the protocol on a particular input is a path in the tree starting at the root. At a node owned by Alice, following the path to the left subtree corresponds to Alice sending a zero to Bob and taking the right subtree corresponds to Alice sending a one to Bob; and similarly for nodes owned by Bob. 

More formally, we define a {\DEF randomized protocol (with private random bits and nonnegative outputs)} as a rooted binary tree with some extra information attached to its nodes. Let $X$ and $Y$ be finite sets, as above. Each node of the tree has a {\DEF type}, which is either $X$ or $Y$. To each node $v$ of type $X$ are attached two function $p_{0,v}, p_{1,v} : X \to [0,1]$; to each node $v$ of type $Y$ are attached two functions $q_{0,v}, q_{1,v} : Y \to [0,1]$; and to each leaf $v$ is attached a nonnegative vector $\Lambda_v$ that is a column vector of size $|X|$ for leaves of type $X$ and a row vector of size $|Y|$ for leaves of type $Y$. The functions $p_{i,v}$ and $q_{j,v}$ define {\DEF transition probabilities}, and we assume that $p_{0,v}(x)+p_{1,v}(x) \leqslant 1$ and $q_{0,v}(y)+q_{1,v}(y) \leqslant 1$. Figure \ref{fig:protocol_example} shows an example of a protocol.

\begin{figure}[ht]
  \centering
  \begin{subfigure}[b]{0.35\textwidth}
    \centering
    \includegraphics[width=\textwidth]{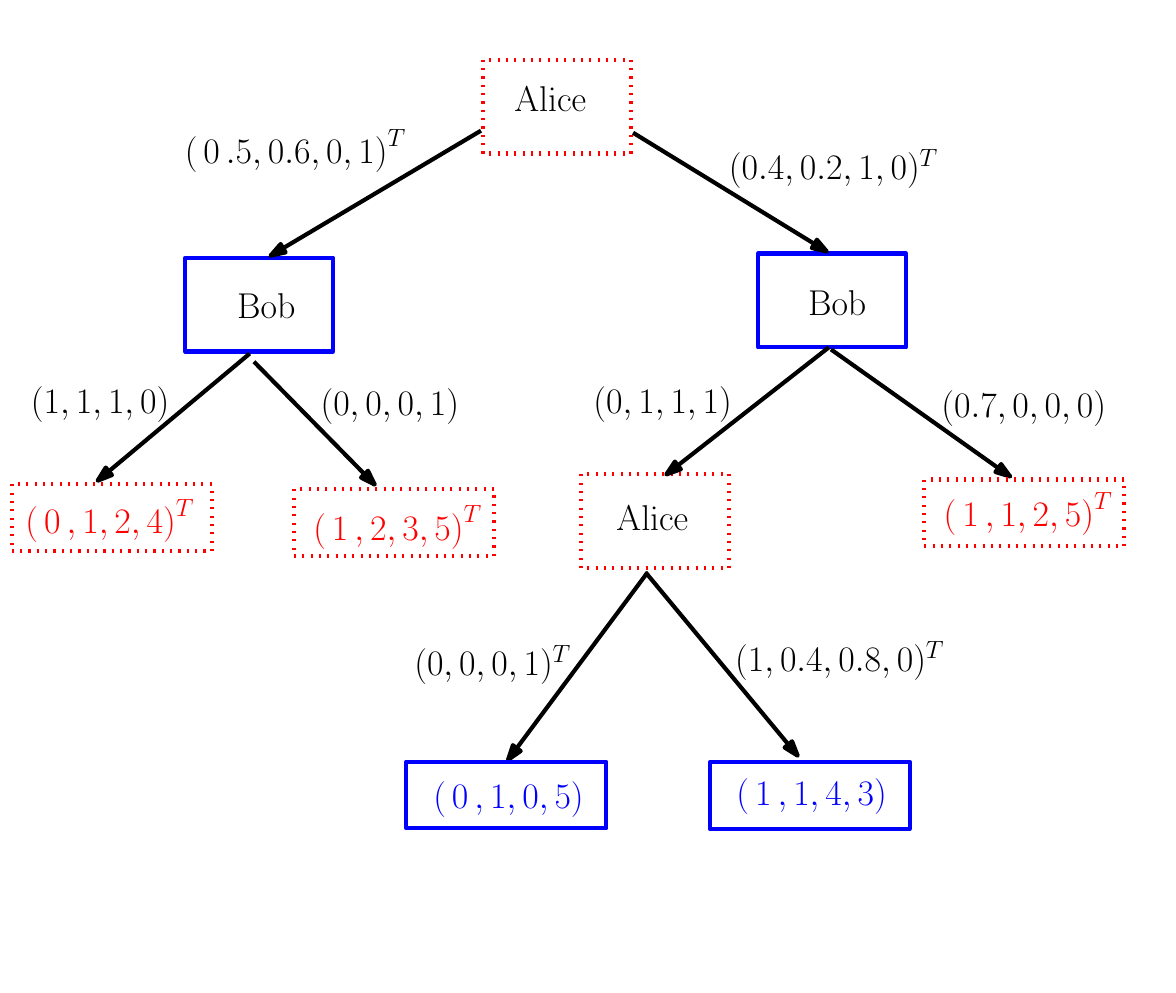}
    \caption{Protocol as a tree.}
  \end{subfigure}
  \begin{subfigure}[b]{0.35\textwidth}
    \centering
    \includegraphics[width=\textwidth]{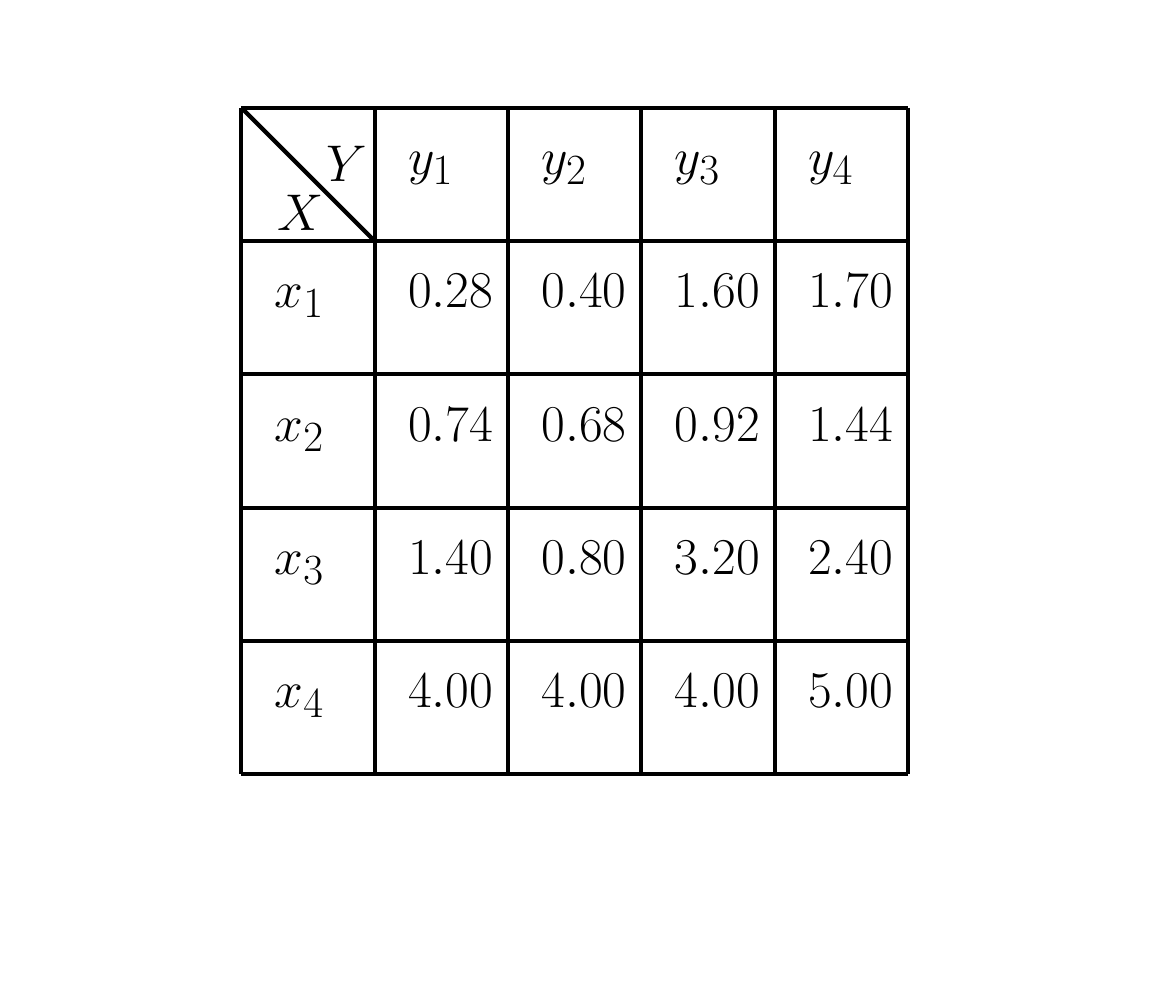}
    \caption{The associated communication matrix.}
  \end{subfigure}
  \caption{Illustration of a (non-optimal) randomized protocol computing a matrix in expectation.}
  \label{fig:protocol_example}
\end{figure}

An {\DEF execution} of the protocol on input $(x,y) \in X \times Y$ is a random path that starts at the root and descends to the left child of an internal node $v$ with probability $p_{0,v}(x)$ if $v$ is of type $X$ and $q_{0,v}(y)$ if $v$ is of type $Y$, and to the right child of $v$ with probability $p_{1,v}(x)$ if $v$ is of type $X$ and $q_{1,v}(y)$ if $v$ is of type $Y$. With probability $1-p_{0,v}(x)-p_{1,v}(x)$ and $1-q_{0,v}(y)-q_{1,v}(y)$ respectively, the execution stops at $v$. For an execution stopping at leaf $v$ with vector $\Lambda_v$, the {\DEF value} of the execution is defined as the entry of $\Lambda_v$ that corresponds to input $x \in X$ if $v$ is of type $X$, and $y \in Y$ if $v$ is of type $Y$. For an execution stopping at an internal node, the {\DEF value} is defined to be $0$.

For each fixed input $(x,y) \in X \times Y$, the value of an execution on input $(x,y)$ is a random variable. If we let $Z\subseteq \mathbb{R}_+$ as before, we say that the protocol {\DEF computes} a function $f : X \times Y \to Z$ {\DEF in expectation} if the expectation of this random variable on each $(x,y) \in X \times Y$ is precisely $f(x,y)$.

The {\DEF complexity} of a protocol is the height of the corresponding tree. 

Given an ordering $x_1$, \ldots, $x_m$ of the elements of $X$, and $y_1$, \ldots, $y_n$ of the elements of $Y$, we can visualize the function $f : X \times Y \to Z$ as a $m \times n$ nonnegative matrix $S = S(f)$ such that $S_{i,j} = f(x_i,y_j)$ for all $(i,j) \in [m] \times [n]$. The matrix $S$ is called the {\DEF communication matrix} of $f$. Below, as is natural, we will not always make a distinction between a function and its communication matrix. 

These formal definitions capture the informal ones given above. Observe that the nodes of type $X$ are assigned to Alice, and those of type $Y$ to Bob. Observe also that Alice and Bob have unlimited resources for performing their part of the computation. It is only the communication between the two players that is accounted for. When presenting a protocol, we shall often say that one of the two players sends an integer $k$ rather than a binary value. This should be interpreted as the player sending the binary encoding of $k$ or, as a (sub)tree of height $\lceil \lg k \rceil$. Finally, our definitions are such that the complexity of a protocol equals the number of bits exchanged by Alice and Bob. 

\subsection{Normalized Variance}\label{subsec:variance}
Since the output of a randomized protocol - as defined above - is a random variable, one can define its variance. However, we would like to refine the notion of variance so that protocols computing different scalings of the same matrix have the same variance. This is essential since the nonnegative rank of a matrix is an invariant under scaling and, as we will see in the next section, there is an equivalence between the nonnegative rank of a matrix $S$ and the smallest complexity protocol computing $S$ in expectation.

Let $S$ be a nonnegative matrix and suppose there exists a protocol of complexity $c$ computing $S$ in expectation. Let $\xi_{i,j}$ denote the random variable corresponding to the output of the protocol on input $(x_i,y_j) \in X \times Y.$ That is $\mathbb{E}[\xi_{i,j}]=S_{i,j}.$ The  \emph{normalized variance} $\sigma^2$ of the protocol is defined as the maximum variance of the random variables $\xi'_{i,j}=\frac{\xi_{i,j}}{S_{i,j}}$ for the nonzero entries of $S.$ That is $$\sigma^2= \max_{(i,j) \mid S_{i,j}\neq 0}\text{Var}(\xi_{i,j}/S_{i,j}) $$

\section{Factorizations vs. protocols}
\label{sec:p2f}

\begin{theorem}
\label{thm:p_vs_f}
If there exists a randomized protocol of complexity $c$ computing a matrix $S \in \mathbb{R}^{X \times Y}_+$ in expectation, then $\lg \nnegrk(S) \leqslant c$. Conversely, if the nonnegative rank of matrix $S \in \mathbb{R}^{m \times n}_+$ is $r$, then there exists a randomized protocol computing $S$ in expectation, whose complexity is at most $\lceil \lg r \rceil$. In other words, if $c_{\mathrm{min}}(S)$ denotes the minimum complexity of a randomized protocol computing $S$ in expectation, we have
$$
c_{\mathrm{min}}(S) = \lceil \lg \nnegrk(S) \rceil.
$$
\end{theorem}

\begin{proof}
Suppose there exists a protocol of complexity $c$ computing $S$ in expectation. Each node $v$ of the protocol has a corresponding {\DEF traversal probability matrix} $P_v \in \mathbb{R}^{X \times Y}_+$ such that, for all inputs $(x,y) \in X \times Y$, the entry $P_v(x,y)$ is the probability that an execution on input $(x,y)$ goes through node $v$. 

Let $v_1$, \ldots, $v_k$ denote the nodes of type $X$ on the unique path from the root to the parent of $v$, and let $w_1$, \ldots, $w_\ell$ denote the nodes of type $Y$ on this path. Then we have
$$
P_v(x,y) = \prod_{i=1}^k p_{\alpha_i,v_i}(x) \cdot \prod_{j=1}^\ell q_{\beta_j,w_j}(y),
$$
where $\alpha_i$ is either $0$ or $1$ depending on if the path goes the left or right subtree at $v_i$, and similarly for $\beta_j$. We immediately see that $P_v$ is a rank one matrix of the form $a_v b_v$ where $a_v$ is a column vector of size $|X|$ and $b_v$ is a row vector of size $|Y|$.

Finally, let $L_X$ and $L_Y$ be the set of all leaves of the protocol that are of type $X$ and $Y$ respectively and let $\Lambda_v$ denote the (column or row) vector of values at a leaf $v \in L_X \cup L_Y$. Because the protocol computes $S$ in expectation, for all inputs $(x,y) \in X \times Y$ we have $S(x,y) = \sum_{v \in L_X} \Lambda_v(x) P_v(x,y)+\sum_{w \in L_Y} P_w(x,y)\Lambda_w(y) $. Thus, $S = \sum_{v \in L_X} (\Lambda_v\circ a_v)b_v + \sum_{v \in L_Y}  a_w(b_w \circ \Lambda_w),$ where $\circ$ denotes the Hadamard product. Therefore, it is possible to express $S$ as a sum of at most $|L_X \cup L_Y| \leqslant 2^c$ nonnegative rank one matrices. Hence, $\nnegrk(S) \leqslant 2^c$, that is, $\lg \nnegrk(S) \leqslant c$

To prove the other part of the theorem, let $A \in \mathbb{R}_+^{m \times r}$ and $B \in \mathbb{R}_+^{r \times n}$ be nonnegative matrices such that $S = AB$. By scaling, we can assume that the maximum row sum of $A$ is $1$. Otherwise, we replace $A$ and $B$ by $\Delta^{-1}A$ and $\Delta B$ respectively, where $\Delta$ denotes the maximum row sum of $A$. 

The protocol is as follows: Alice knows a row index $i$, and Bob knows a column index $j$. Together they want to compute $S_{i,j}$ in expectation, by exchanging as few bits as possible. They proceed as follows. Let $\delta_i := \sum_{k} A_{i,k} \leqslant 1.$  Alice selects a column index $k \in [r]$ according to the probabilities found in row $i$ of matrix $A$, sends this index to Bob, and Bob outputs the entry of $B$ in row $k$ and column $j$. With probability $1-\delta_i$ Alice does not send any index to Bob and the computation stops with implicit output zero (see subsection \ref{sec:rand_protocols}).

This randomized protocol computes the matrix $S$ in expectation. Indeed, the expected value on input $(i,j)$ is $\sum_{k=1}^{r} A_{i,k} B_{k,j} = S_{i,j}$. Moreover, the complexity of the protocol is precisely $\lceil \lg (r) \rceil$. \qed
\end{proof}

The above theorem together with Theorem \ref{thm:Y91} gives us the following corollary:

\begin{cor}\label{cor:protocol_extform}
Let $P$ be a polytope with associated slack matrix $S = S(P)$, such that $P$ is neither empty or a point. If there exists a randomized protocol of complexity $c$ computing $S$ in expectation, then $\xc(P)\leqslant 2^c$. Conversely, if $\xc(P)=r$, then there exists a randomized protocol computing $S$ in expectation, whose complexity is at most $\lceil \lg r \rceil$. In other words, if $c_{\mathrm{min}}(S)$ denotes the minimum complexity of a randomized protocol computing $S$ in expectation, we have
$$
c_{\mathrm{min}}(S(P)) = \lceil \lg \xc(P) \rceil.
$$
\end{cor}

The concrete polytopes considered in this paper have some facet-defining inequalities enforcing nonnegativity of the variables along with other facet-defining inequalities. The next lemma and its corollary will allow us to ignore the rows corresponding to nonnegativity inequalities, and focus on the non-trivial parts of the slack matrices.

\begin{lemma}
\label{lem:partition}
Let $S$ be a nonnegative matrix. Let $R_1, R_2$ be a partition of the rows of $S$ defining partition of $S$ into $S_1$ and $S_2$. If there exist randomized protocols computing $S_1$ and $S_2$ in expectation with complexity $c_1$ and $c_2$ respectively, then there exists a randomized protocol complexity computing $S$ with complexity $1+\max\{c_1,c_2\}$.
\end{lemma}
\begin{proof}
When Alice gets a row index of $S$ she sends a bit to Bob to indicate whether the corresponding row lies in $R_1$ or $R_2$. Now that both Alice and Bob know whether they want to compute an entry in $S_1$ or $S_2$, they use the protocol for that particular submatrix. \qed
\end{proof}

\begin{cor}
\label{cor:split}
Let $P \subseteq \mathbb{R}_+^d$ be a polytope and let $S'(P)$ denote the submatrix of $S(P)$ obtained by deleting the rows corresponding to nonnegativity inequalities. If there is a complexity $c$ randomized protocol for computing $S'(P)$ in expectation, then there is a complexity $1+\max\{c,\lceil\lg d\rceil\}$ randomized protocol for computing $S(P)$ in expectation.
\end{cor}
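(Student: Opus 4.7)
The plan is to apply Lemma~\ref{lem:partition} to the two-block row partition of $S(P)$ consisting of $S'(P)$ (handled by hypothesis) and the complementary block $T(P)$ formed by the rows corresponding to the non-negativity inequalities $x_i \geqslant 0$, $i \in [d]$. Because $P \subseteq \mathbb{R}^d_+$, the $(i, v)$-entry of $T(P)$ is simply $v_i \geqslant 0$, the $i$-th coordinate of the vertex $v$. In view of the lemma, it suffices to construct a randomized protocol of complexity $\lceil \lg d \rceil$ computing $T(P)$ in expectation.

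The natural such protocol has Alice, who holds the row index $i \in [d]$, transmit the binary representation of $i$ to Bob using $\lceil \lg d \rceil$ deterministic bits. Once Bob also knows $i$, he produces $v_i$ in expectation via a single biased coin-flip: setting $b_i := \max\{v_i : v \in \vertexset(P)\}$ (a constant depending only on $i$ and $P$, hard-coded into the protocol), Bob's node has transition probability $v_i / b_i$, and its two leaves carry values $b_i$ and $0$ respectively, yielding expected output $v_i$. The degenerate case $b_i = 0$ corresponds to a zero row of $T(P)$ and is handled trivially.

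Lemma~\ref{lem:partition} then glues this protocol together with the given complexity-$c$ protocol for $S'(P)$ at the cost of one extra announcement bit from Alice, yielding overall complexity $1 + \max\{c, \lceil \lg d \rceil\}$. The only step requiring any care is the precise depth accounting for the $T(P)$-subprotocol: the obvious implementation uses $\lceil \lg d \rceil$ bits from Alice and one more from Bob, naively totaling $\lceil \lg d \rceil + 1$, and one must absorb Bob's closing coin into Lemma~\ref{lem:partition}'s announcement bit in order to match the stated bound. This small bookkeeping point is the only subtlety in what is otherwise a direct reduction.
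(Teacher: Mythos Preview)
Your approach is exactly the paper's: partition $S(P)$ into $S'(P)$ and the non-negativity block $T(P)$, handle the latter by having Alice transmit the coordinate index $i$ and Bob produce $v_i$, then invoke Lemma~\ref{lem:partition}. The paper's own proof says this in two sentences and simply asserts that the index-transmission protocol has complexity $\lceil\lg d\rceil$.

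You are right to flag the depth-accounting issue: in the formal tree model adopted in the paper (complexity $=$ tree height, values live only at leaves), the step ``Bob outputs $v_i$'' requires one further level, so the honest height of the $T(P)$-protocol is $\lceil\lg d\rceil+1$, not $\lceil\lg d\rceil$. The paper glosses over this. However, your proposed remedy---absorbing Bob's terminal coin into the announcement bit of Lemma~\ref{lem:partition}---does not actually work and you do not carry it out: that announcement bit is Alice's, sits at the root, and is deterministic, whereas Bob's coin is at the bottom of the $T(P)$ subtree and depends on his input; the two cannot be merged. Indeed, for $d=2$ a height-$1$ tree has a single internal node and two leaves, and whether that node is of type $X$ or type $Y$ the expected output depends on only one of the two inputs, so no height-$\lceil\lg d\rceil$ protocol computes a generic $T(P)$ in expectation. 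The stated constant is therefore genuinely off by one in both the paper and your write-up. This is harmless for everything downstream, since the Corollary is only ever used up to additive $O(1)$ (as the paper notes immediately afterwards), but you should not claim the sharper bound unless you can substantiate the absorption step.
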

\begin{proof}
For computing the part of $S(P)$ that is deleted in $S'(P)$, which corresponds to nonnegativity inequalities, we use the obvious protocol where Alice sends her row number to Bob and Bob computes the slack. Since at most $d$ facets of $P$ are defined by nonnegativity inequalities, this protocol has complexity $\lceil\lg d\rceil$. The corollary thus follows from Lemma~\ref{lem:partition}. \qed
\end{proof}

For the protocols constructed here, we will always have $c \geqslant \lceil\lg d\rceil$. Because of  Corollary~\ref{cor:split}, we can thus ignore the nonnegativity inequalities without blowing up the size of any extension by more than a factor of $2$. Moreover, in terms of lower bounds, it is always safe to ignore inequalities because the nonnegative rank of a matrix cannot increase when rows are deleted.

\section{Examples}

In this section, we give three illustrative examples of protocols defining nonnegative factorizations of various slack matrices, and thus (via Corollary~\ref{cor:protocol_extform}) extensions of the corresponding polytopes. The first one gives a $O(n^3)$-size extension of the stable set polytope of a claw-free perfect graph. The second one is a reinterpretation of a well-known $O(n^3)$-size extended formulation for the spanning tree polytopes due to Martin~\cite{Martin91}. Our interpretation allows for a more general result. In particular we prove new upper bounds for the spanning tree polytopes for minor-free graphs. The third one concerns the perfect matching polytopes and is implicit in Kaibel, Pashkovich and Theis~\cite{KaibelPashkovichTheis10}.

\subsection{The stable set polytope of a claw-free perfect graph}
\label{ex:claw-free_perfect}

A graph $G$ is called {\DEF claw-free} if no vertex has three pairwise non-adjacent neighbors. Even though the separation problem for $\STAB(G)$ for claw-free graphs is polynomial-time solvable, no explicit description of all its facets is known (see, e.g.,~\cite{SchrijverBookB03}, page 1216). Recently Faenza, Oriolo, and Stauffer~\cite{FaenzaOrioloStauffer12} provided (non-compact) extended formulations for this polytope, while Galluccio et al.~\cite{GalluccioGentileVentura10} gave a complete description of the facets for claw-free graphs with at least one stable set of size greater than or equal to four, and no clique-cutsets.
Also, recall that for a perfect graph $G$ the facets of $\STAB(G)$ are defined by inequalities (\ref{eq:stab_clique}) and (\ref{eq:stab_nneg}) (see Section~\ref{sec:stab_polytope}).

Let $G$ be a claw-free, perfect graph with $n$ vertices. We give a deterministic protocol that computes the slack matrix of the stable set polytope $\STAB(G)$ of $G$. Because $G$ is perfect, the (non-trivial part of the) slack matrix of $\STAB(G)$ has the following structure: it has one column per stable set $S$ in $G$, and each one of its rows corresponds to a clique $K$ in $G$. The entry for a pair $(K,S)$ equals $0$ if $K$ and $S$ intersect (in which case they intersect in exactly one vertex) and $1$ if $K$ and $S$ are disjoint (note that we are ignoring the $|V|$ rows that correspond to nonnegativity inequalities (\ref{eq:stab_nneg}). This can be done safely, see Corollary~\ref{cor:split}).

Consider the communication problem in which Alice is given a clique $K$ of $G$, Bob is given a stable set $S$ of $G$, and Alice and Bob together want to compute $1 - |K \cap S|$. Alice starts and sends the name of any vertex $u$ of her clique $K$ to Bob. Then Bob sends the names of all the vertices of its stable set $S$ that are in $N(u) \cup \{u\}$ to Alice, where $N(u)$ denotes the neighborhood of $u$ in $G$. Finally, Alice can compute $K \cap S$ because this intersection is contained in $N(u) \cup \{u\}$ and Alice knows all vertices of $S \cap (N(u) \cup \{u\})$. She outputs $1 - |K \cap S|$. Because $G$ is claw-free, there are at most two vertices in $S \cap (N(u) \cup \{u\})$, thus at most $3\lg n + O(1)$ bits are exchanged by Alice and Bob. It follows that there exists an extension (and hence, an extended formulation) of $\STAB(G)$ of size $O(n^3)$. Notice that the normalized variance of our protocol is zero, because it is deterministic.

We obtain the following result.

\begin{proposition}
 For every perfect, claw-free graph $G$ with $n$ vertices, $\STAB(G)$ has an extended formulation of size $O(n^3)$.
\end{proposition}

\subsection{The spanning tree polytope}\label{subsec:MST} 

Let $\SPAN(G)$ denote the spanning tree polytope of a graph $G=(V,E)$ (see Section~\ref{sec:st_polytope}). The (non-trivial part of the) slack matrix of $P$ has one column per spanning tree $T$ and one row per proper nonempty subset $U$ of vertices. The slack of $T$ with respect to the inequality that corresponds to $U$ is the number of connected components of the subgraph of $T$ induced by $U$ (denoted by $T[U]$ below) minus one.

In terms of the corresponding communication problem, Alice has a proper non\-empty set $U$ and Bob a spanning tree $T$. Together, they wish to compute the slack of the pair $(U,T)$. Alice sends the name of some (arbitrarily chosen) vertex $u$ in $U$. Then Bob picks an edge $e$ of $T$ uniformly at random and sends to Alice the endpoints $v$ and $w$ of $e$ as an ordered pair of vertices $(v,w)$, where the order is chosen in such a way that $w$ is on the unique path from $v$ to $u$ in the tree. That is, she makes sure that the directed edge $(v,w)$ ``points'' towards the root $u$. Then Alice checks that $v \in U$ and $w \notin U$, in which case she outputs $n-1$; otherwise she outputs $0$.

\begin{figure}[ht]
\centering
\scalebox{.75}{\input{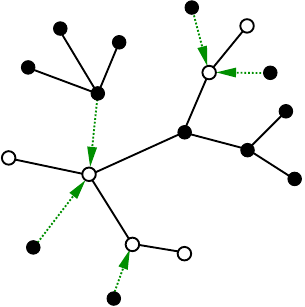_t}}
\caption{Illustration of the protocol for the slack of MST polytope. The black vertices are those in $U$. The green directed edges are those for which Alice outputs a non-zero value. The number of such edges is the number of connected components of $T[U]$ minus one.}
\label{fig:tree}
\end{figure}

The resulting randomized protocol is clearly of complexity $\lg{|V|} + \lg{|E|} + O(1)$. Moreover, it computes the slack matrix in expectation because for each connected component of $T[U]$ distinct from that which contains $u$, there is exactly one directed edge $(v,w)$ that will lead Alice to output a non-zero value, see Figure~\ref{fig:tree} for an illustration. Since she outputs $(n-1)$ in this case, the expected value of the protocol on pair $(U,T)$ is $(n-1) \cdot (k-1)/(n-1) = k-1$, where $k$ is the number of connected components of $T[U]$. Therefore, we obtain the following result. 

\begin{proposition}
 For every graph $G$ with $n$ vertices and $m$ edges, $\SPAN(G)$ has an extended formulation of size $O(mn).$
\end{proposition}

The above result is implicit in Martin~\cite{Martin91}, although the paper only states the following corollary. More specifically, variables $z_{i,j,k}$ such that $ij$ is not an edge of $G$ can be deleted from his $O(n^3)$-size extended formulation, so that the resulting formulation has size $O(mn)$.

\begin{cor}
 $\SPAN(K^n)$ has extended formulation of size $O(n^3)$, where $K^n$ is the complete graph on $n$ vertices.
\end{cor}

\begin{cor} \label{cor:minor-free}
 Let $G$ be an $H$ minor-free graph, where $H$ is a graph with $h$ vertices, then $\SPAN(G)$ has extended formulation of size $O(n^2h\sqrt{\lg h}).$
\end{cor}
\begin{proof}
 It is known that any $H$ minor-free graph $G$ with $n$ vertices has at most $O(nh\sqrt{\lg h})$ edges, where $h$ is the number of vertices of $H$~\cite{Thomason01}. The result follows. \qed
\end{proof}

We remark that when $G$ is planar, $\SPAN(G)$ has an extended formulation of size $O(n)$~\cite{Williams02}. It is natural to ask whether a linear size extended formulation also exists for general $H$ minor-free graphs. So far, the best that seems to be known is the upper bound in Corollary \ref{cor:minor-free}.

Finally, it can be easily verified that the normalized variance of the protocol given above is $\sigma^2 = n-2$, which is large compared to the previous protocol. 

\subsection{Perfect Matching Polytope}\label{subsec:pm_ub}
For the next example, we will need the fact that one can cover $K^n$ with $k = O(2^{n/2} \mathrm{poly}(n))$ balanced complete bipartite graphs $G_1$,\ldots, $G_k$ in such a way that every perfect matching of $K^n$ is a perfect matching of at least one of the $G_i$'s. We say that $X \subseteq [n]$ is an \emph{$(n/2)$-subset} of $[n]$ if $|X|=n/2$. Given a matching $M$ of $K^n$ and a $(n/2)$-subset $X$ of $[n]$, we say that $X$ is \emph{compatible} with $M$ if all the edges of $M$ have exactly one end in $X$. 

\begin{lemma}
\label{lem:pm_cover}
Let $n$ be an even positive integer. There exists a collection of $k = O(2^{n/2} \sqrt{n} \ln n)$ $(n/2)$-subsets $X_1$,\ldots, $X_k$ of $[n]$ such that for every perfect matching $M$ of $K^n$ at least one of the subsets $X_i$ is compatible with $M$.
\end{lemma}
\begin{proof}
Finding a minimum size such collection $X_1$, \ldots, $X_k$ amounts to solving a set covering instance that we formulate by an integer linear program. For each $(n/2)$-subset $X$, we define a variable binary variable $\lambda(X)$. For each each perfect matching $M$, these variables have to satisfy the constraint $\sum \{\lambda(X) : X$ is compatible with $M\} \geqslant 1$. The goal is to minimize $\sum \lambda(X)$, the sum of all variables $\lambda(X)$.

A feasible fractional solution to this linear program is to let $\lambda^*(X) = 1/2^{n/2}$. This gives a feasible fractional solution because each perfect matching $M$ is compatible with exactly $2^{n/2}$ $(n/2)$-subsets $X$, so $\sum \{\lambda^*(X) : X$ is compatible with $M\} = 2^{n/2} (1/2^{n/2}) = 1$. (By symmetry considerations, it is in fact possible to argue that this solution is actually optimal.) The cost of this fractional solution $\lambda^*$ is
$$
\sum \lambda^*(X) = \frac{1}{2^{n/2}} {n \choose n/2} \leqslant \frac{2^{n/2}}{\sqrt{n}},
$$
for $n$ sufficiently large. By Lov\'asz's analysis of the greedy algorithm for the set covering problem~\cite{Lovasz_greedy}, there exists a feasible integer solution $\lambda$ of cost at most $(1 + \ln u)$ times the fractional optimum, where $u$ is the number of elements to cover. By what precedes, this is at most
$$
\left(1 + \ln \frac{n!}{2^{n/2} (n/2)!}\right) \frac{2^{n/2}}{\sqrt{n}} = O(2^{n/2} \sqrt{n} \lg n),
$$
from which the result follows directly. \qed
\end{proof}

Assume that $n$ is even and let $P$ denote the perfect matching polytope of the complete graph $K^n$ with vertex set $[n]$, see Section~\ref{sec:pm_polytope}. The (non-trivial part of the) slack matrix of $P$ has one column per perfect matching $M$, and its rows correspond to odd sets $U \subseteq [n]$. The entry for a pair $(U,M)$ is $|\delta(U) \cap M| - 1$ (recall that $\delta(U)$ denotes the set of edges that have one endpoint in $U$ and the other endpoint in $\overline U$, the complement of $U$).

We describe a randomized protocol for computing the slack matrix in expectation, of complexity at most $(1/2 + \varepsilon)n$, where $\varepsilon > 0$ can be made as small as desired by taking $n$ large. First, Bob finds an $(n/2)$-subset $X \subseteq [n]$ that is compatible with his matching $M$, and tells the name of this subset to Alice, see Lemma \ref{lem:pm_cover}. Then Alice checks which of $X$ and $\overline{X}$ contains the least number of vertices of her odd set $U$. Without loss of generality, assume it is $X$. If $U \cap X = \varnothing$ then, because $U \subseteq \bar{X}$ and $X$ is compatible with $M$, Alice can correctly infer that the slack is $|U| - 1$, and outputs this number. Otherwise, she picks a vertex $u$ of $U \cap X$ uniformly at random and send its name to Bob. He replies by sending the name of $u'$, the mate of $u$ in the matching $M$. Alice then checks whether $u'$ is in $U$ or not. If $u'$ is not in $U$, then she outputs $|U| - 1$. Otherwise $u'$ is in $U$, and she outputs $|U| - 
1 - 2|U \cap X|$. Telling the name of $X$ can be done in at most $n/2 + \lg \sqrt{n} + \lg \lg n + O(1)$ bits, see Lemma \ref{lem:pm_cover}. The extra amount of communication is $2 \lg n + O(1)$ bits. In total, at most $(1/2 + \varepsilon)n$ bits are exchanged, for $n$ sufficiently large ($\varepsilon > 0$ can be chosen arbitrarily).

Now, we check that the protocol correctly computes the slack matrix of the perfect matching polytope. Letting $E[U]$ denote the edges of the complete graph with both endpoints in $U$, the expected value output by Alice (in the case $U \cap X \neq \varnothing$) is
\begin{eqnarray*}
&&(|U| - 1) \frac{|U \cap X| - |E[U] \cap M|}{|U \cap X|} +
(|U|- 1 - 2 |U \cap X|) \frac{|E[U] \cap M|}{|U \cap X|}\\
&=& |U| - 1 - 2 |U \cap X | \frac{|E[U] \cap M|}{|U \cap X|}\\
&=& |U| - 2 |E[U] \cap M| - 1\\
&=& |\delta(U) \cap M| - 1.
\end{eqnarray*}

We obtain the following result.

\begin{proposition}
 Let $\varepsilon > 0$. For every large enough even nonnegative integer $n$, $\PM(K^n)$ has an extended formulation of size at most $2^{(1/2+\varepsilon)n}$.
\end{proposition}

We remark that our extension has size at most $2^{(1/2+\varepsilon)n} \leqslant (1.42)^n$, whereas the main result of Yannakakis~\cite{Yannakakis91} gives a lower bound of ${n \choose n/4} \geqslant (1.74)^n$ for the size of any \emph{symmetric} extension. 

\section{When low variance forces large size}
\label{sec:lb_pm}

We have seen that every extension of a polytope $P$ corresponds to a randomized protocol computing its slack matrix $S = S(P)$ in expectation and vice-versa. Now we show that if the set disjointness matrix can be embedded in a certain way in a matrix $S$ (see below for definitions), then efficient protocols computing $S$ in expectation necessarily have large variance. We prove that such an embedding can be found for the slack matrices of the perfect matching polytope and also, surprisingly, of the spanning tree polytope.

\subsection{Embedding the set disjointness matrix} 

The {\DEF set disjointness problem} is the following communication problem: Alice and Bob each are given a subset of $[n]$. They wish to determine whether the two subsets intersect or not. In other words, Alice and Bob have to compute the {\DEF set disjointness matrix} $\textrm{DISJ}$ defined by $\textrm{DISJ}(A,B) = 1$ if $A$ and $B$ are disjoint subsets of $[n]$, and $\textrm{DISJ}(A,B) = 0$ if $A$ and $B$ are non-disjoint subsets of $[n]$. The set disjointness problem plays a central role in communication complexity, comparable to the role played by the satisfiability problem in NP-completeness theory~\cite{ChattopadhyayPitassi2010}. 

It is known that any randomized protocol that computes the disjointness function {\em with high probability} (that is, the probability that the value output by the protocol is correct is, for each input, bounded from below by a constant strictly greater than $1/2$) has $\Omega(n)$ complexity~\cite{KalyanasundaramSchnitger92,Razborov92}.

Consider a matrix $S \in \mathbb{R}^{X \times Y}_+$. An {\DEF embedding} of the set disjointness matrix on $[n]$ in $S$ is defined by two maps $\alpha : 2^{[n]} \to X$ and $\beta : 2^{[n]} \to Y$ such that 
\begin{equation}
\label{eq:embedding}
\forall A, B \subseteq [n]: \textrm{DISJ}(A,B) = 1 \iff S(\alpha(A),\beta(B)) = 0.
\end{equation}
Notice that this kind of embedding could be called ``negative'' because zeros in the set disjointness matrix correspond to non-zeros in $S$. 

We remark that ``positive'' embeddings of the set disjointness matrix force up the rank of $S$, because the rank of any matrix with the same support as the set disjointness matrix on $[n]$ is at least $2^n$ \cite{HoyerWolf2002}. This is not desirable because the nonnegative rank of $S$ is always at least its rank. Thus the lower bound on the nonnegative rank of $S$ obtained from such a ``positive'' embedding would be useless in our context (the rank of the slack matrix $S(P)$ of polytope $P$ equals $\dim(P) + 1$).

However, ``positive'' embeddings the {\em unique set disjointness matrix}, that is the restriction of the set disjointness matrix to pairs $(A,B)$ such that $|A \cap B| \leqslant 1$, do not have this problem of forcing up the rank. Actually,  ``positive'' embeddings of the unique set disjointness matrix led to the main result of Fiorini \emph{et al.}~\cite{FMPTW11}.

\begin{theorem} \label{thm:conditional_lowerbound}
Let $S \in \mathbb{R}^{X \times Y}_+$ be a matrix in which the set disjointness matrix on $[n]$ can be embedded. Consider a randomized protocol computing $S$ in expectation. If the probability that the protocol outputs a non-zero value, given an input $(x,y)$ with $S(x,y) > 0$, is at least $p = p(n)$, then the protocol has complexity $\Omega(np)$. In particular, by Chebyshev's inequality, the complexity is $\Omega(n(1-\sigma^2))$, where
$\sigma^2$ denotes the normalized variance of the protocol. 
\end{theorem}
\begin{proof}
Let $c$ be the complexity of the protocol computing $S$ in expectation. From this protocol, we obtain a new protocol, this time for the set disjointness problem, by mapping each input pair $(A,B) \in 2^{[n]} \times 2^{[n]}$ to the corresponding input pair $(\alpha(A),\beta(B)) \in X \times Y$ (Alice and Bob can do this independently of each other), running the original protocol $\lceil1/p\rceil$ times, and outputting $0$ if at least one of the executions led to a non-zero value or $1$ otherwise. 

The new protocol always outputs $1$ for every disjoint pair $(A,B)$ because of \eqref{eq:embedding} (remember that our protocols have nonnegative outputs), and outputs $0$ most of the times for non-disjoint pairs $(A,B)$. More precisely, the probability of outputting $0$ in case $(A,B)$ is non-disjoint is at least $1 - (1-p)^\frac{1}{p} \geqslant 1 -\mathrm{e}^{-1} > 1/2$, where $\mathrm{e}$ is Euler's number. The theorem follows then directly from the fact that the new protocol has complexity $O(c/p)$ and from the fact that the set disjointness problem has randomized communication complexity $\Omega(n)$. \qed
\end{proof}

\subsection{The perfect matching polytope} 

First, we construct an embedding of the set disjointness matrix in the slack matrix of the perfect matching polytope. Then, we discuss implications for extensions of the perfect matching polytope.

\begin{lemma}
\label{lem:reduction1}
There exists an embedding of the set disjointness matrix on $[n]$ in the slack matrix of the perfect matching polytope for perfect matchings of $K^\ell$, where $\ell \leqslant 3n + 14$. 
\end{lemma}
\begin{proof}
Let $k \leqslant n+4$ denote the first multiple of $4$ that is strictly greater than $n$, and let $\ell := 3k + 2 \leqslant 3n + 14$. 

For two subsets $A$ and $B$ of $[n]$, we define an odd set $U := \alpha(A)$ and a perfect matching $M := \beta(B)$ as follows. 

First, we add the dummy element $n+1$ to $B$ in case $|B|$ is odd, so that both $B$ and $[k] - B$ contain an even number of elements. Note that this does not affect the intersection of $A$ and $B$ because $A$ is contained in $[n]$. Then, we let $U := \{i : i \in A\} \cup \{i+k : i \in A\} \cup \{3k+1\}$. 

Second, we define $M$ by adding matching edges to the partial matching $\{\{i,i+k\} : i \in [k] - B\} \cup \{\{i+k,i+2k\} : i \in B\} \cup \{\{3k+1,3k+2\}\}$ in such a way that each of the extra edges matches two consecutive unmatched vertices both in $\{i : i \in [k]\}$ or both in $\{i+2k : i \in [k]\}$. See Figure~\ref{fig:pmreduction} for an example.

It can be easily verified that $A$ and $B$ are disjoint if and only if the slack for $(U,M)$ is zero. Hence, the maps $\alpha : A \mapsto U$ and $\beta : B \mapsto M$ define the desired embedding of the set disjointness matrix. \qed
\end{proof}

\begin{figure}[ht]
\centering
\scalebox{.75}{\input{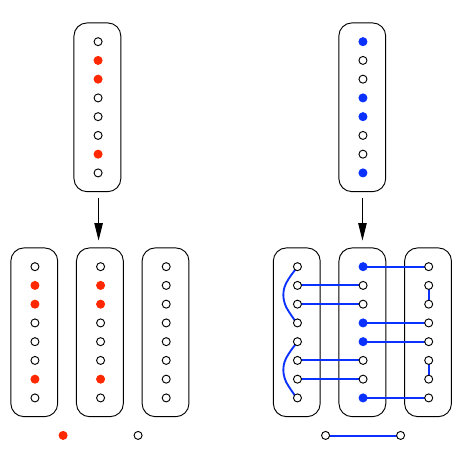_t}}
\caption{Constructing an odd set and a perfect matching from a set disjointness instance.\label{fig:pmreduction}}
\end{figure}

Let $P$ denote the perfect matching polytope of $K^n$. Consider a size-$r$ extension of $P$ and a corresponding complexity-$\lceil \lg r \rceil$ protocol computing $S(P)$ in expectation (the existence of such a protocol is guaranteed by Theorems \ref{thm:Y91} and \ref{thm:p_vs_f}). Lemma \ref{lem:reduction1} and Theorem \ref{thm:conditional_lowerbound} together imply that $r = 2^{\Omega(n(1-\sigma^2))}$, where $\sigma^2$ is the normalized variance of the protocol. For instance, deterministic protocols for computing the slack matrix of the perfect matching polytope give rise to exponential size extensions ($\sigma^2 = 0$ in this case). The same holds if $\sigma^2$ is a constant with $0 < \sigma^2 < 1$. When $\sigma^2$ is about $(n-1)/n$ or more, the bound given by Theorem~\ref{thm:conditional_lowerbound} becomes trivial.

\subsection{Spanning tree polytopes}
\label{sec:spanning_trees}

We prove that similar results hold for the spanning tree polytope of $K^n$ as well. This is surprising, because for this polytope an extension of size $O(n^3)$ exists. 

\begin{lemma}
\label{lem:reduction2}
There exists an embedding of the set disjointness matrix on $[n]$ in the slack matrix of the spanning tree polytope of $K^{2n+1}$.
\end{lemma}
\begin{proof}
Let $\ell := 2n+1$. Recall that the rows and columns of (the non-trivial part of) the slack matrix of the spanning tree polytope of $K^\ell$ respectively correspond to subsets $U$ and spanning trees $T$. The entry for a pair $(U,T)$ is zero iff the subgraph of $T$ induced by $U$ is connected.

Given an instance of the set disjointness problem with sets $A,B\subseteq [n]$, we define $U := \alpha(A)$ and $T := \beta(B)$ as follows. For every $i \in [n]$ add the edge $\{i,2n+1\}$ to $T$. For every $i \in B$ add the edge $\{n+i,i\}$ to $T$ and for every $i \in [n] - B$ add the edge $\{n+i,2n+1\}$ to $T$. See Figure~\ref{fig:streduction} for an example.

\begin{figure}[ht]
\centering
\includegraphics[width=0.35\textwidth]{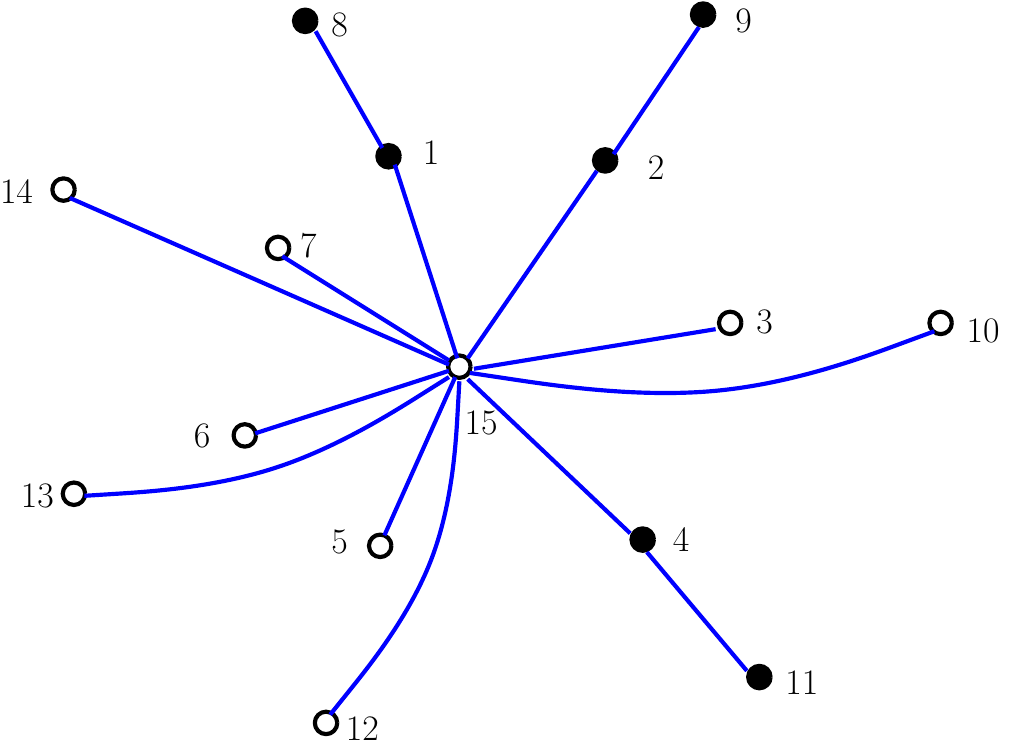}
\caption{The spanning tree $T$ for $B = \{1,2,4\}$ and $n = 7$. Black vertices are those of the form $i$ or $n+i$ where $i \in B$.\label{fig:streduction}}
\end{figure}

Finally, we let $U := \{n+i : i \in A\} \cup \{2n+1\}$. As is easily seen, $T[U]$ is connected iff $A \cap B = \varnothing$. Indeed, if $i \in A \cap B$ then $n+i$ and $2n+1$ are in different connected components of $T[U]$. Moreover, if $A \cap B = \varnothing$ then $T[U]$ is a star with $2n+1$ as center. \qed
\end{proof}

Therefore, the ``low variance forces large size'' phenomenon we exhibited for the perfect matching polytope also holds for the spanning tree polytope. Incidentally, the $O(n^3)$-size extension for the spanning tree polytope of $K^n$ can be obtained via randomized protocols, but not via deterministic ones. This is because Lemma~\ref{lem:reduction2} and Theorem~\ref{thm:conditional_lowerbound} implies that any extension for the spanning tree polytope that corresponds to a deterministic protocol must have exponential size. (Notice that the value of $p = p(n)$ for the protocol given in Subsection~\ref{subsec:MST} is roughly $1/n$.)

\section{Concluding remarks}

Given a perfect matching $M$ and an odd set $U$ as above there is always an edge in $\delta(U) \cap M$. But it is not clear if such an edge can be found using a protocol with sublinear communication. Now we show that if such an edge can be found using few bits then the perfect matching polytope has an extension of small size.

\begin{theorem}
Suppose Alice is given an odd set $U \subseteq [n]$ and Bob is given a perfect matching $M$ of $K^n$. Furthermore, suppose that Bob knows an edge $e \in \delta(U)\cap M$. Then, there exists a randomized protocol of complexity $2 \lg n + O(1)$ that computes the slack for the pair $(U,M)$ in expectation.
\end{theorem}
\begin{proof}
The protocol works as follows. Bob picks an edge $e'$ from $M \setminus\{e\}$ uniformly at random and sends it to Alice. She outputs $|M|-1=n/2-1$ if $e'\in \delta(U)$ and $0$ otherwise. The expected value of the protocol is $(|M|-1) \cdot (|\delta(U) \cap M|-1)/(|M|-1) = |\delta(U) \cap M| - 1$, as required. Bob needs to send the endpoints of the edge $e'$ to Alice and this requires $2 \lg n + O(1)$ bits. \qed
\end{proof}

The theorem above implies that if an edge in $\delta(U) \cap M$ can be computed using a protocol requiring $o(n)$ bits, then there exists an extension for the perfect matching polytope of subexponential size. We leave it as an open question to settle the existence of such a protocol.

\section*{Acknowledgements}

The authors thank Sebastian Pokutta and Ronald de Wolf for their useful feedback. The research of Faenza was supported by the German Research Foundation (DFG) within the Priority Programme 1307 Algorithm Engineering. The research of Grappe was supported by the Progetto di Eccellenza 2008--2009 of the Fondazione Cassa di Risparmio di Padova e Rovigo. The research of Fiorini was partially supported by the \emph{Actions de Recherche Concert\'ees} (ARC) fund of the French community of Belgium. The research of Tiwary was supported by the \emph{Fonds National de la Recherche Scientifique} (F.R.S.--FNRS).

\bibliographystyle{splncs}

\end{document}